\documentclass[Royal,times]{sagej_NoJournalName}
\usepackage{moreverb,url}

\usepackage[colorlinks,bookmarksopen,bookmarksnumbered,citecolor=red,urlcolor=red]{hyperref}
\usepackage{amsmath}
\usepackage{float}
\newcommand\BibTeX{{\rmfamily B\kern-.05em \textsc{i\kern-.025em b}\kern-.08em
T\kern-.1667em\lower.7ex\hbox{E}\kern-.125emX}}

\newtheorem{theorem}{Theorem}[section]

\newtheorem{proposition}[theorem]{Proposition}

\theoremstyle{definition}
\newtheorem{definition}[theorem]{Definition}

\theoremstyle{remark}

\begin{document}

\runninghead{Market Makers and Risk Aversion}

\title{Market Makers and Risk Aversion: A Hamiltonian Approach to the Excess Volatility Puzzle}

\author{Will Hicks\affilnum{1}}

\affiliation{\affilnum{1} Centre for Quantum Social and Cognitive Science, Memorial University of Newfoundland}

\corrauth{Will Hicks}

\email{whicks7940@googlemail.com}

\begin{abstract}
In this article we model chaotic dynamics in financial markets by treating the market price, and market makers' inventory, as anharmonic oscillators with a nonlinear coupling. The market makers' risk appetite being the key parameter that determines the degree of chaos in the system. The article demonstrates that whilst external shocks and random noise are important in the treatment of financial time-series, they are not necessary in order to generate unpredictable price changes.
\end{abstract}
\keywords{Risk Aversion, Market Making, Excess Volatility Puzzle, Chaotic Markets, Chaos Theory in Financial Modelling, Anharmonic Oscillators with Nonlinear Coupling.}
\maketitle
\section{Introduction}
It is a very well known phenomenon that financial market time-series show unpredictable returns. Indeed, were this not the case the process of making money in the stock market would be much easier than it is. Often in finance, unpredictability is introduced through new information:
\begin{itemize}
\item A random walk: the price jumps up (good news) or down (bad news).
\item A limit order book model: new buy/sell orders arrive according to a stochastic process. For example a Poisson process.
\end{itemize}
The objective of this article is to design simple models by which unpredictable price changes are the result of predictable forces acting on a market price.
\subsection{The Vasicek Model:}
Many models in finance and economics borrow intuitive ideas from the world of physics. The concept of ``equilibrium'' is just such an idea. To illustrate, consider the Vasicek model (see \cite{Vasicek}). In this model, the dyanamic variable $r_t$ (in this case the short rate) experiences a mean reversion to some kind of equilibrium: $\overline{r}$, at a rate determined by the mean reversion speed: $\gamma$. In addition, one accounts for the unpredictable element of time evolution using a stochastic process $W_t$:
\begin{align}\label{vasicek}
dr_t&=\gamma(\overline{r}-r_t)dt+\sigma dW_t
\end{align}
In fact, one can write down the time evolution as a stochastic ordinary differential equation:
\begin{align*}
M\frac{d^2r(t)}{dt^2}+\frac{dr(t)}{dt}+\gamma r(t)&=\sigma\xi(t)+\gamma\overline{r}\\
\xi(t)&\text{: a white noise}
\end{align*}
Where taking the limit: $M\rightarrow 0$, leads to the Vasicek model. That is we can instinctively understand the model as comprising:
\begin{enumerate}
\item[1)] Internal market forces comprising a linear restoring force to the equilibrium $\overline{r}$.
\item[2)] External, and unpredictable, events that impact the price.
\end{enumerate}
Although not directly addressed in the original article, one way of thinking about this approach might be that, in the absence of any random noise from outside the market, a high market price for a traded financial asset (one far above the equilibrium value) would lead to selling. A low price would lead to buying. This would lead to oscillations about the long term equilibrium.

The subsequent development of the Vasicek model centred around enriching the structure firstly to allow the model to be calibrated against the full bond curve (ie traded bonds with different maturities) and secondly to allow calibration against different derivative contracts that depend on the volatility parameter. In this article, for reasons discussed below, we ignore the stochastic element of the price evolution. Instead we focus on the internal market forces. We show how unpredictable price changes can occur when making relatively simple extensions to the internal market forces considered in the Vasicek approach.
\subsection{The Excess Volatility Puzzle:}
The excess volatility puzzle refers to the observation that financial prices seem to move around far more than can plausibly be explained by news events. For example in \cite{WMS} and later also in \cite{WMS2}, the authors look at the largest daily changes in equity markets and attempt to match the prices against external events. For example market crashes that occurred in October 2008 could easily  be linked to the announcement that Lehman Brothers (a large US bank) had filed for bankruptcy. In fact in \cite{WMS}, published in 1989, the authors found that the majority of the largest daily moves in major indices had no obvious cause. The authors of a follow up study in 2013 (see \cite{WMS2}) came to similar conclusions. In \cite{Bouchaud}, the authors apply the market microstructure approach to this problem. In this article we look at extending the simple assumptions underlying the Ornstein-Uhlenbeck process and the Vasicek model. The aim is to understand the impact on financial market dynamics, and to understand how turbulent price dynamics might arise from inside the system itself. 
\subsection{The Financial Market as a Hamiltonian System:}
When one models the time evolution of a dynamic variable: $X$ (where $X$ can be a scalar or a vector) using a system of equations:
\begin{align*}
\frac{dX}{dt}=f(X)
\end{align*}
one may describe such systems as ``closed'' from a mathematical perspective in the sense that knowledge of $X$ is sufficient to determine the time evolution of $X$. There is no interaction with an external environment or any random element (such as a stochastic process: $W_t$). Despite this there is no reason to assume that such a system can be described by a Hamiltonian function. Often, one can view such systems as being dissipative. A typical example is the damped pendulum. However, this system is not closed in a physical sense, and energy is lost as heat into the environment. Therefore we make the assumption that any system that is truly ``closed'' in the the physical sense, must have some kind of conserved quantity, and thus be, in some way governed by a Hamiltonian function.

Ultimately, uncertain events and uncertain interaction with the world outside {\em will} impact what is going on inside the financial market. In any realistic model, these factors should be incorporated. Nonetheless, the objective in this article is to investigate the extent to which internal dynamics of the market can lead to unpredictable price changes, and whether this might be part of the answer to the excess volatility puzzle.

We note also that in the Vasicek case, the model takes the limit of mass to zero, and incorporates damping. The deterministic dynamics are dissipative, and the variance, after adding the noise that drives the system, is bounded above. Without the noise element, the system simply reverts instantaneously to the equilibrium point. In this article, the objective is to investigate the extent to which the internal dynamics of the market can drive complex and unpredicable price changes. Therefore, we take the view that a non-zero `mass' parameter is necessary to meaningfully investigate these effects.
\section{Market Makers and Static Risk Aversion}\label{static section}
\subsection{Hamiltonian Formulation:}
We start with the concept of a linear restoring force pulling the traded price $x$ back to the equilibrium price $x_0$:
\begin{align}\label{H_start}
H(x,P_x)&=\frac{P_x^2}{2M_x}+\frac{K_x(x-x_0)^2}{2}
\end{align}
We now imagine the roll of market makers in the setup, and introduce as a second variable, the inventory currently held by market makers: $v$. In general, market makers look to profit from order flow, rather than through taking risk. Therefore, we make the assumption that the inventory held by market makers experiences a linear restoring force to zero. In fact, the magnitude of the risk position for market makers is given by the inventory multiplied by the price: $xv$. Therefore, the Hamiltonian \ref{H_start} becomes:
\begin{align*}
H(x,v,P_x,P_v)&=\frac{P_x^2}{2M_x}+\frac{P_v^2}{2M_v}+\frac{K_x(x-x_0)^2}{2}+\frac{\epsilon(xv)^2}{2}
\end{align*}
Finally, we assume the inventory $v$ is itself constrained somehow by a potential function: $f(v)$ (for example where the total supply of the asset is limited), so that we have:
\begin{align}\label{H_static}
H(x,v,P_x,P_v)&=\frac{P_x^2}{2M_x}+\frac{P_v^2}{2M_v}+\frac{K_x(x-x_0)^2}{2}+\frac{\epsilon(xv)^2}{2}+f(v)
\end{align}
Note that if we simplify \ref{H_static} by using a quadratic potential (with linear restoring force) for $v$, then we find that the risk aversion term: $\frac{\epsilon(xv)^2}{2}$, represents a nonlinear perturbation of an otherwise integrable system. For this reason, we use the suggestive term: $\epsilon$ to represent the strength of the risk aversion.
\subsection{Canonical Perturbation Theory:}
We start by assuming a simple form for $f(v)$ and re-writing Hamiltonian \ref{H_static} using action-angle variables.
\begin{proposition}\label{prop_sys1_H}
Let the potential function for $v$ be given by: $f(v)=\frac{K_vv^2}{2}$. Then, the system described in equation \ref{H_static} can be written:
\begin{align}\label{H_simp1}
H(I_x,\theta_x,I_v,\theta_v)&=\frac{K_xI_x}{M_x}+\frac{K_vI_v}{M_v}+\epsilon\Big(\frac{2I_xI_v}{K_xK_v}\Big)\sin^2(\theta_x)\sin^2(\theta_v)
\end{align}
Where we have:
\begin{align*}
x&=\sqrt{\frac{2I_x}{M_x}}\sin(\theta_x)+x_0\text{, }v=\sqrt{\frac{2I_v}{M_v}}\sin(\theta_v)\\
p_x&=\sqrt{2I_xK_x}\cos(\theta_x)\text{, }p_v=\sqrt{2I_vK_v}\cos(\theta_v)
\end{align*}
\end{proposition}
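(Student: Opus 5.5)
The plan is to substitute directly, since the proposition is a change of variables into the stated action--angle coordinates. I would handle the three pieces of \ref{H_static} separately: the free $x$--oscillator, the free $v$--oscillator with $f(v)=\frac{K_vv^2}{2}$, and the coupling term $\frac{\epsilon(xv)^2}{2}$.

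\textbf{Step 1 (the unperturbed part).} Substituting $x-x_0=\sqrt{2I_x/M_x}\sin\theta_x$ and $p_x=\sqrt{2I_xK_x}\cos\theta_x$ gives
\begin{align*}
\frac{p_x^2}{2M_x}+\frac{K_x(x-x_0)^2}{2}=\frac{K_xI_x}{M_x}\big(\cos^2\theta_x+\sin^2\theta_x\big)=\frac{K_xI_x}{M_x}.
\end{align*}
The $v$--oscillator is handled identically and gives $\frac{K_vI_v}{M_v}$. This reproduces the first two terms of \ref{H_simp1}.

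\textbf{Step 2 (the coupling term).} Substituting into $\frac{\epsilon}{2}x^2v^2$ gives the product of $\sin^2\theta_x$ and $\sin^2\theta_v$ with the prefactor $\frac{\epsilon}{2}\cdot\frac{2I_x}{M_x}\cdot\frac{2I_v}{M_v}$. This matches the stated form only under two conventions, which I would make explicit rather than hide.
\begin{itemize}
\item The risk term must be taken with price measured relative to equilibrium, i.e.\ $x_0=0$ or equivalently $(x-x_0)v$ in place of $xv$. Otherwise the cross terms $2x_0(x-x_0)v^2+x_0^2v^2$ survive. The term $x_0^2v^2$ could be absorbed into $K_v$, but the term linear in $\sin\theta_x$ cannot.
\item The constants in the denominator come out as $M_xM_v$. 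They become $K_xK_v$ only under a normalisation such as $K=M$ (unit frequency), or after rescaling the variables.
\end{itemize}

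\textbf{Step 3 (canonicity).} This is the main obstacle. I would compute the Poisson bracket
\begin{align*}
\frac{\partial x}{\partial\theta_x}\frac{\partial p_x}{\partial I_x}-\frac{\partial x}{\partial I_x}\frac{\partial p_x}{\partial \theta_x}=\sqrt{\frac{K_x}{M_x}}.
\end{align*}
So the map is canonical only when $K_x=M_x$; in general it is canonical up to the constant factor $\omega_x=\sqrt{K_x/M_x}$. I would resolve this in one of two ways. One option is to rescale time and momenta by $\omega_x$ and $\omega_v$, so that Hamilton's equations in $(\theta,I)$ hold with \ref{H_simp1}. The cleaner option is to state the result under the normalisation $K=M$, or with $x_0=0$ and the adjusted constants, noting that the qualitative structure used later in the paper is unchanged under any of these conventions. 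That structure is an integrable part depending only on the actions, plus an $\epsilon$--perturbation proportional to $I_xI_v\sin^2\theta_x\sin^2\theta_v$.

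With those conventions fixed, Steps 1 and 2 together assemble \ref{H_simp1}.
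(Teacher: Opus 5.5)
Your route is the paper's own: its proof is the one sentence saying that substituting the action--angle variables ``gives back'' (\ref{H_static}). You actually carry out that substitution. The obstructions you find are real defects of the statement, which the paper's proof glosses over; they are not gaps in your argument.
For $x_0\neq 0$ the coupling leaves $\frac{\epsilon}{2}x_0^2v^2$ and $\epsilon x_0(x-x_0)v^2\propto\sqrt{I_x}\,I_v\sin\theta_x\sin^2\theta_v$. Even at $x_0=0$ the prefactor is $2I_xI_v/(M_xM_v)$, not $2I_xI_v/(K_xK_v)$. Finally, $dx\wedge dp_x=\sqrt{K_x/M_x}\,d\theta_x\wedge dI_x$, so the map is not canonical.
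Hence (\ref{H_simp1}) holds as a canonical change of variables only if $x_0=0$ \emph{and} $K_x=M_x$, $K_v=M_v$. Neither condition holds in the paper's own numerical section ($x_0=3$, $K_x=0.11$, $M_x=1$).

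Your Step 3 repairs need tightening, though. A time rescaling multiplies both frequencies by one common factor, so it cannot absorb the two different factors $\sqrt{K_x/M_x}$ and $\sqrt{K_v/M_v}$. Equivalently, in any canonical $2\pi$-periodic angles the unperturbed frequencies must be the physical ones $\sqrt{K/M}$, not the $K/M$ appearing in (\ref{H_simp1}). So your first option fails unless $\omega_x=\omega_v$.

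The normalisation $K=M$ does make everything consistent, but it forces $\omega_x=\omega_v=1$. That is exactly the resonance Proposition \ref{KAM_Prop} must exclude, so your claim that the later structure survives ``under any of these conventions'' is too quick. Note also that $K=M$ alone does not remove the $x_0$ terms, while $x_0=0$ with $M_xM_v$ alone gives an identity of functions but not canonical coordinates; you need both.

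The general fix is the standard canonical choice $x-x_0=\sqrt{2J_x/(M_x\omega_x)}\sin\theta_x$, $p_x=\sqrt{2J_xM_x\omega_x}\cos\theta_x$ with $\omega_x=\sqrt{K_x/M_x}$, and likewise for $v$. For $x_0=0$ this gives
\begin{align*}
H=\omega_xJ_x+\omega_vJ_v+\epsilon\,\frac{2J_xJ_v}{M_xM_v\,\omega_x\omega_v}\sin^2\theta_x\sin^2\theta_v .
\end{align*}
This allows distinct frequencies for the averaging argument, and it reduces to (\ref{H_simp1}) precisely when $K=M$.
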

\begin{proof}
Plugging the chosen action-angle variables $I_x,I_v$ and $q_x,q_v$ into equation \ref{H_simp1} gives back the original Hamiltonian \ref{H_static}.
\end{proof}
\subsection{Kolmogorov Arnold Moser Theory:}\label{KAM}
Note first that the Hamiltonian \ref{H_simp1} is of the form:
\begin{align*}
H(I_x,I_v,\theta_x,\theta_v)&=\frac{K_xI_x}{M_x}+\frac{K_vI_v}{M_v}+\epsilon H_1(I_x,I_v,\theta_x,\theta_v)
\end{align*}
If we set the risk appetite $\epsilon=0$, then the Hamiltonian takes the form of a harmonic oscillator. The system is integrable, and the solutions periodic:
\begin{align}\label{KAM_periodic}
\dot{I}_x&=\dot{I}_v=0\\
\dot{\theta}_x&=\frac{K_x}{M_x}\text{, }\dot{\theta}_v=\frac{K_v}{M_v}\nonumber
\end{align}
The price and market maker inventory simply oscillate about their respective equilibria. The objective now is to apply a coordinate transformation so that we can write the Hamiltonian in the form:
\begin{align*}
H(I_x,I_v,\theta_x,\theta_v)&=\frac{K_xI_x}{M_x}+\frac{K_vI_v}{M_v}+H'_1(I_x,I_v)+O(\epsilon^2)
\end{align*}
In fact, assuming the perturbation is sufficiently small we can transform the variables an arbitrary number of times to get:
\begin{align*}
H(I_x,I_v,\theta_x,\theta_v)&=\sum_{i=1}^{n-1}H_i(I_x,I_v)+O(\epsilon^{2n})
\end{align*}
That is, the simple periodic dynamics for the price and market maker inventory is retained. Following the standard method (see for example \cite{Arnold}) we have:
\begin{proposition}\label{KAM_Prop}
For a sufficiently small perturbation: $\epsilon$, the Hamiltonian \ref{H_simp1} can be written:
\begin{align}\label{H_static_KAM}
H(I_x,I_v,\theta_x,\theta_v)&=\frac{K_xI_x}{M_x}+\frac{K_vI_v}{M_v}+\epsilon\frac{I_xI_v}{2K_xK_v}+O(\epsilon^2)
\end{align}
For these small perturbations, the system retains its' periodic dynamics. That is the price and market marker inventory oscillate about their respective equilibria. The size of the perturbation required for the breakdown of the approximation, and the breakdown of periodic dynamics, is dependent on avoiding certain resonant frequencies. For example:
\begin{align*}
\frac{K_x}{M_x}\approx\frac{K_v}{M_v}
\end{align*}
\end{proposition}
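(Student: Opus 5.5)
The plan is first-order canonical (Lie/Poincar\'e--von Zeipel) perturbation theory applied to the action-angle form \ref{H_simp1} of Proposition \ref{prop_sys1_H}, following \cite{Arnold}. Write $\omega_x=K_x/M_x$, $\omega_v=K_v/M_v$ for the unperturbed frequencies of \ref{KAM_periodic}, and $H_1=\frac{2I_xI_v}{K_xK_v}\sin^2\theta_x\sin^2\theta_v$. I will look for a near-identity canonical transformation $(I,\theta)\mapsto(J,\phi)$ generated by $S(J,\theta)=J_x\theta_x+J_v\theta_v+\epsilon S_1(J,\theta)$. This transformation should remove the angle dependence at order $\epsilon$.

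\textbf{Step 1 (Fourier decomposition).} Use $\sin^2\theta=\tfrac12(1-\cos2\theta)$ and a product-to-sum identity to write
\begin{align*}
\sin^2\theta_x\sin^2\theta_v=\tfrac14\Big[1-\cos2\theta_x-\cos2\theta_v+\tfrac12\cos(2\theta_x-2\theta_v)+\tfrac12\cos(2\theta_x+2\theta_v)\Big].
\end{align*}
The angle average is therefore $\langle H_1\rangle=\frac{2I_xI_v}{K_xK_v}\cdot\frac14=\frac{I_xI_v}{2K_xK_v}$, which is exactly the correction term in \ref{H_static_KAM}.

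\textbf{Step 2 (homological equation).} Substituting $S$ into $H$ and collecting the $O(\epsilon)$ terms gives
\begin{align*}
\omega_x\partial_{\theta_x}S_1+\omega_v\partial_{\theta_v}S_1=\langle H_1\rangle-H_1.
\end{align*}
Each Fourier mode $\cos(k\cdot\theta)$ is solved by $\sin(k\cdot\theta)/(k\cdot\omega)$. This gives $S_1$ explicitly, with denominators $2\omega_x$, $2\omega_v$, $2(\omega_x+\omega_v)$ and $2(\omega_x-\omega_v)$. The new Hamiltonian is then $\omega_xJ_x+\omega_vJ_v+\epsilon\langle H_1\rangle(J)+O(\epsilon^2)$. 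The $O(\epsilon^2)$ remainder is bounded by $\epsilon^2$ times norms of $S_1$ and its derivatives on a bounded region of action space. Since the new actions are constant to this order, $J_x,J_v$ are conserved and the angles rotate uniformly at $\omega+\epsilon\partial_J\langle H_1\rangle$. Hence $x$ and $v$ remain quasi-periodic oscillations about $x_0$ and $0$, which is the periodicity claim. Iterating the same procedure gives the higher-order normal forms mentioned before the proposition.

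\textbf{Step 3 (resonance).} The only denominator that can vanish is $\omega_x-\omega_v$. When $\frac{K_x}{M_x}\approx\frac{K_v}{M_v}$, the coefficient of $\sin(2\theta_x-2\theta_v)$ in $S_1$ is of size $\epsilon/|\omega_x-\omega_v|$. The expansion is valid only while $\epsilon\ll|\omega_x-\omega_v|$, so the admissible perturbation size depends on the distance from this resonance, as claimed. Higher orders produce denominators $k\cdot\omega$ for larger integer vectors $k$, giving the general small-divisor (Diophantine) condition.

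\textbf{Main obstacle.} I expect the obstacle to be upgrading this formal, finite-order statement to genuine persistence of periodic or quasi-periodic motion, that is, KAM rather than just averaging. The unperturbed Hamiltonian is linear in the actions, so Kolmogorov's nondegeneracy condition fails at $\epsilon=0$. I would first try to use the averaged term $\epsilon\frac{J_xJ_v}{2K_xK_v}$ to supply the twist. Its Hessian is nondegenerate, and the isoenergetic condition can be checked on energy levels. The difficulty is that this twist is only $O(\epsilon)$, so I would invoke a KAM theorem for properly degenerate systems as in \cite{Arnold}. Failing that, I would content myself with the finite-order statement above, holding on time scales $O(1/\epsilon)$.
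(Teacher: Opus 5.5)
Your proposal is correct and follows essentially the same route as the paper. Both use a type~II generating function $S=J\cdot\theta+\epsilon S_1$, the homological equation $\omega_x\partial_{\theta_x}S_1+\omega_v\partial_{\theta_v}S_1=\langle H_1\rangle-H_1$, the average $\frac{I_xI_v}{2K_xK_v}$, and the small divisor $\omega_x-\omega_v$; solving mode by mode in Fourier space is the same computation as the paper's rotation to characteristic coordinates $(u,v)$. Your remark that $H_0$ is isochronous, so that turning the formal first-order normal form into genuine persistence of quasi-periodic motion needs the $O(\epsilon)$ twist and a KAM theorem for properly degenerate systems, is a legitimate caveat that the paper's argument does not address.
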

\begin{proof}
We write Hamiltonian \ref{H_simp1} as:
\begin{align*}
H(I_x,I_v,\theta_x,\theta_v)&=\omega_xI_x+\omega_vI_v+\epsilon f(I_x,I_v,\theta_x,\theta_v)\\
fI_x,I_v,\theta_x,\theta_v)&=\Big(\frac{2I_xI_v}{K_xK_v}\Big)\sin^2(\theta_x)\sin^2(\theta_v)\\
\omega_x&=\frac{K_x}{M_x}\text{, }\omega_v=\frac{K_v}{M_v}
\end{align*}
We apply a type II generating function:
\begin{align*}
S(I_x',I_v',\theta_x,\theta_v)&=I_x'\theta_x+I_v'\theta_v+\epsilon G(I_x',I_v',\theta_x,\theta_v)
\end{align*}
For $G(I_x',I_v',\theta_x,\theta_v)$ yet to be determined. Therefore we have the transformation:
\begin{align*}
I_x&=I_x'+\epsilon\partial_{\theta_x} G\text{, }I_x=I_x'+\epsilon\partial_{\theta_x} G\\
\theta_x'&=\theta_x+\epsilon\partial_{I_x'} G\text{, }\theta_v'=\theta_v+\epsilon\partial_{I_v'} G
\end{align*}
We now have:
\begin{align*}
H(I_x',I_v',\theta_x',\theta_v')=\omega_xI_x'+\omega_vI_v'+\epsilon\Big(\omega_x\partial_{\theta_x}G+\omega_v\partial_{\theta_v}G+f(I_x',I_v',\theta_x',\theta_v')\Big)+O(\epsilon^2)
\end{align*}
Thus, if we can find:
\begin{align}\label{G_pde1}
\omega_x\partial_{\theta_x}G+\omega_v\partial_{\theta_v}G&=\langle f\rangle(I_x',I_v')-f(I_x'I_v',\theta_x',\theta_v')\\
\langle f\rangle(I_x',I_v')&=\int_0^{2\pi}\int_0^{2\pi}f(I_x'I_v',\theta_x',\theta_v')d\theta_x'd\theta_v'\nonumber
\end{align}
Then we have a `fast angular rotation' approximation:
\begin{align*}
H(I_x',I_v',\theta_x',\theta_v')=\omega_xI_x'+\omega_vI_v'+\langle f\rangle(I_x',I_v')
\end{align*}
Therefore, if we can solve equation \ref{G_pde1} to find an appropriate $G(I_x',I_v',\theta_x,\theta_v)$, then the result follows from the fact that in this case we have:
\begin{align*}
\langle f\rangle(I_x',I_v')&=\frac{I_x'I_v'}{2K_xK_v}
\end{align*}
Using the chain rule, \ref{G_pde1} can be written as a simple 1st order equation:
\begin{align}\label{G_pde2}
(\omega_x^2+\omega_v^2)\partial_u G&=\frac{I_x'I_v'}{2K_xK_v}\bigg(\cos\Big(\frac{2\omega_xu+2\omega_vv}{\omega_x^2+\omega_v^2}\Big)+\cos\Big(\frac{2\omega_vu-2\omega_xv}{\omega_x^2+\omega_v^2}\Big)\\
&-\cos\Big(\frac{2\omega_xu+2\omega_vv}{\omega_x^2+\omega_v^2}\Big)\cos\Big(\frac{2\omega_vu-2\omega_xv}{\omega_x^2+\omega_v^2}\Big)\bigg)\nonumber\\
&=\frac{I_x'I_v'}{2K_xK_v}\bigg(\cos\Big(\frac{2\omega_xu+2\omega_vv}{\omega_x^2+\omega_v^2}\Big)+\cos\Big(\frac{2\omega_vu-2\omega_xv}{\omega_x^2+\omega_v^2}\Big)\nonumber\\
&-\frac{1}{2}\cos\Big(\frac{2(\omega_x-\omega_v)u+2(\omega_x+\omega_v)v}{\omega_x^2+\omega_v^2}\Big)\nonumber\\
&-\frac{1}{2}\cos\Big(\frac{2(\omega_x+\omega_v)u+2(\omega_x-\omega_v)v}{\omega_x^2+\omega_v^2}\Big)\nonumber
\end{align}
Where:
\begin{align*}
\begin{pmatrix}u\\v\end{pmatrix}&=\begin{pmatrix}\omega_x&&\omega_v\\\omega_v&&-\omega_x\end{pmatrix}\begin{pmatrix}\theta_x'\\\theta_v'\end{pmatrix}
\end{align*}
and equation \ref{G_pde2} is easily solvable as long as $|\omega_x-\omega_v|>>\epsilon$.
\end{proof}
\subsection{Breakdown of Periodic Motion and the Onset of Chaos}
In fact, in order to prove that the dynamics are periodic for a particular $\epsilon$, requires us to ensure the expansion, where the first order term is given in proposition \ref{KAM_Prop}, converges as you repeat the process for higher and higher powers of $\epsilon$. In fact the $O(\epsilon^2)$ terms in \ref{H_static_KAM} will contain $\partial_{I_x'}G$ and $\partial_{I_v'}G$. From equation \ref{G_pde2}, it can be seen that these will include terms like:
\begin{align*}
\frac{1}{\omega_x-\omega_v}
\end{align*}
So that if $|\omega_x-\omega_v|\sim\epsilon$, then the remaining terms are no longer $O(\epsilon^2)$. Looking back at the original Hamiltonian \ref{H_static}, one can see this as oscillations in the price causing resonance in the oscillations in the market maker inventory, and vice versa. As we expand to powers of $\epsilon^{2n}$, we will find other resonant frequencies that cause the series to diverge. However, from the analysis presented, we can conclude that the breakdown down of regular motion and the onset of chaos is driven by:
\begin{itemize}
\item Lower levels of risk appetite are consistent with higher levels of $\epsilon$. That is, for a system with a defined level of energy, the maximum potential is reached at lower values for $x,v$ with higher $\epsilon$. In this setup a more risk averse market is more likely to see chaotic motion. Market makers cannot warehouse sufficient risk (due to their risk aversion) to ensure the price returns to equilibrium in an orderly fashion.
\item For markets with higher risk appetite, market makers can warehouse a higher inventory, allowing them to more easily ensure the price reverts to the equilibrium in a predictable fashion.
\item Where the natural frequencies are close to certain resonant frequencies, for example $\omega_x\approx\omega_v$ the market is more prone to chaos. 
\end{itemize}
\subsection{Lagrangian Approach:}
Before moving onto consider a different approach, based on the risk aversion as a dynamic variable, rather than the market maker inventory, we briefly consider the Euler-Lagrange equations for the system.
\begin{proposition}{Static Lagrangian}\newline
For the Hamiltonian given by equation \ref{H_static}, the Euler-Lagrange equations become:
\begin{align}\label{EL_simp1}
\ddot{x}&+\bigg(\frac{K_x+\epsilon v^2}{M_x}\bigg)x=\frac{K_x}{M_x}x_0\\
\ddot{v}&+\frac{\epsilon x^2v+f'(v)}{M_v}=0\nonumber
\end{align}
\end{proposition}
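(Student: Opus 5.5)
The plan is to pass from the Hamiltonian \ref{H_static} to a Lagrangian by a Legendre transform, and then write down the Euler--Lagrange equations. Since $H$ is quadratic in the momenta, with no cross terms, Hamilton's equations give
\[
\dot{x}=\partial_{P_x}H=\frac{P_x}{M_x}, \qquad \dot{v}=\partial_{P_v}H=\frac{P_v}{M_v}.
\]
These invert to $P_x=M_x\dot{x}$ and $P_v=M_v\dot{v}$. Substituting into $L=P_x\dot{x}+P_v\dot{v}-H$ yields
\begin{align*}
L(x,v,\dot{x},\dot{v})&=\frac{M_x\dot{x}^2}{2}+\frac{M_v\dot{v}^2}{2}-\frac{K_x(x-x_0)^2}{2}-\frac{\epsilon(xv)^2}{2}-f(v).
\end{align*}
This is kinetic energy minus the potential $V(x,v)=\frac{K_x(x-x_0)^2}{2}+\frac{\epsilon x^2v^2}{2}+f(v)$.

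Next I apply $\frac{d}{dt}\partial_{\dot{q}}L-\partial_qL=0$ separately for $q=x$ and $q=v$. For $x$, the velocity term gives $M_x\ddot{x}$. The potential contributes $\partial_xV=K_x(x-x_0)+\epsilon v^2x$, so the equation reads
\[
M_x\ddot{x}+K_x(x-x_0)+\epsilon v^2x=0.
\]
Dividing by $M_x$ and moving $\frac{K_x}{M_x}x_0$ to the right-hand side gives exactly the first line of \ref{EL_simp1}. For $v$, the velocity term gives $M_v\ddot{v}$ and the potential contributes $\partial_vV=\epsilon x^2v+f'(v)$. Dividing by $M_v$ gives the second line.

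As a consistency check, the same equations follow from Hamilton's equations $\dot{P}_x=-\partial_xH$ and $\dot{P}_v=-\partial_vH$ once $P_x=M_x\dot{x}$ and $P_v=M_v\dot{v}$ are substituted. This confirms that the Legendre transform was carried out correctly.

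There is no real obstacle here; the argument is a routine computation. The only points needing care are bookkeeping ones. First, the coupling term $\frac{\epsilon(xv)^2}{2}$ must be differentiated with respect to both variables, producing the $\epsilon v^2x$ and $\epsilon x^2v$ cross terms. Second, the constant $K_xx_0/M_x$ must be placed on the right-hand side with the correct sign. Third, $f$ is only assumed differentiable; the specific quadratic choice of Proposition \ref{prop_sys1_H} is not needed.
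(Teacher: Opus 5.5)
Your proposal is correct and takes essentially the same route as the paper: form the Lagrangian $L=\frac{M_x\dot{x}^2}{2}+\frac{M_v\dot{v}^2}{2}-\frac{K_x(x-x_0)^2}{2}-\frac{\epsilon(xv)^2}{2}-f(v)$, then compute $\partial L/\partial\dot{q}$, its time derivative and $\partial L/\partial q$ term by term for $q=x,v$. The only difference is that you justify $L$ via the Legendre transform and add a check against Hamilton's equations, whereas the paper simply writes $L$ down.
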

\begin{proof}
Under Hamiltonian \ref{H_static}, the Lagrangian is given by:
\begin{align*}
L(x,\dot{x},v,\dot{v})&=\frac{M_x\dot{x}^2}{2}+\frac{M_v\dot{v}^2}{2}-\frac{K_x(x-x_0)^2}{2}-\frac{\epsilon(xv)^2}{2}-f(v)
\end{align*}
The canonical momentum is now given by:
\begin{align*}
\frac{\partial L}{\partial\dot{x}}&=M_x\dot{x}\rightarrow\frac{d}{dt}\bigg(\frac{\partial L}{\partial\dot{x}}\bigg)=M_x\ddot{x}\\
\frac{\partial L}{\partial\dot{v}}&=M_v\dot{v}\rightarrow\frac{d}{dt}\bigg(\frac{\partial L}{\partial\dot{v}}\bigg)=M_v\ddot{v}
\end{align*}
Further we have:
\begin{align*}
\frac{\partial L}{\partial x}&=-K_x(x-x_0)-\epsilon v^2x\\
\frac{\partial L}{\partial v}&=-f'(v)-\epsilon x^2v
\end{align*}
\end{proof}
\subsection{System Dyanamics:}
\begin{itemize}
\item We see from equation \ref{EL_simp1} that this system can be described as an anharmonic oscillator whereby an increase in the market maker inventory, together with non-zero risk aversion, increases the market `stiffness':\newline
$K_x\rightarrow K_x+\epsilon v^2$
\item Non-zero risk aversion also reduces the systems' equilibrium price:\newline
$x_0\rightarrow \bigg(\frac{K_x}{X_x+\epsilon v^2}\bigg)x_0$.
\item If we set $f(v)=\frac{K_vv^2}{2}$ then the market maker inventory takes the same form:\newline
$\ddot{v}+\bigg(\frac{K_v+\epsilon x^2}{M_v}\bigg)v=0$
\item That is an increase in the magnitude of the price increases the   stiffness whereby $v$ is forced back to zero. That is the market maker's incentive to offload inventory is increased.
\item Finally, it is instructive to review the system potential function under increasing levels of risk aversion. This is shown in figure \ref{Pot}. Under zero risk aversion the potential is a standard harmonic potential well. As you increase the potential well tightens as risk appetite is fulfilled at lower levels in the price. However, it does so in such a way that the potential function becomes non-convex.
\item Figure \ref{Pot} illustrates the potential function for the case where the market maker inventory has a quadratic potential, with the same fundamental frequency as the market price.
\end{itemize}
\begin{figure}
\includegraphics[scale=0.7]{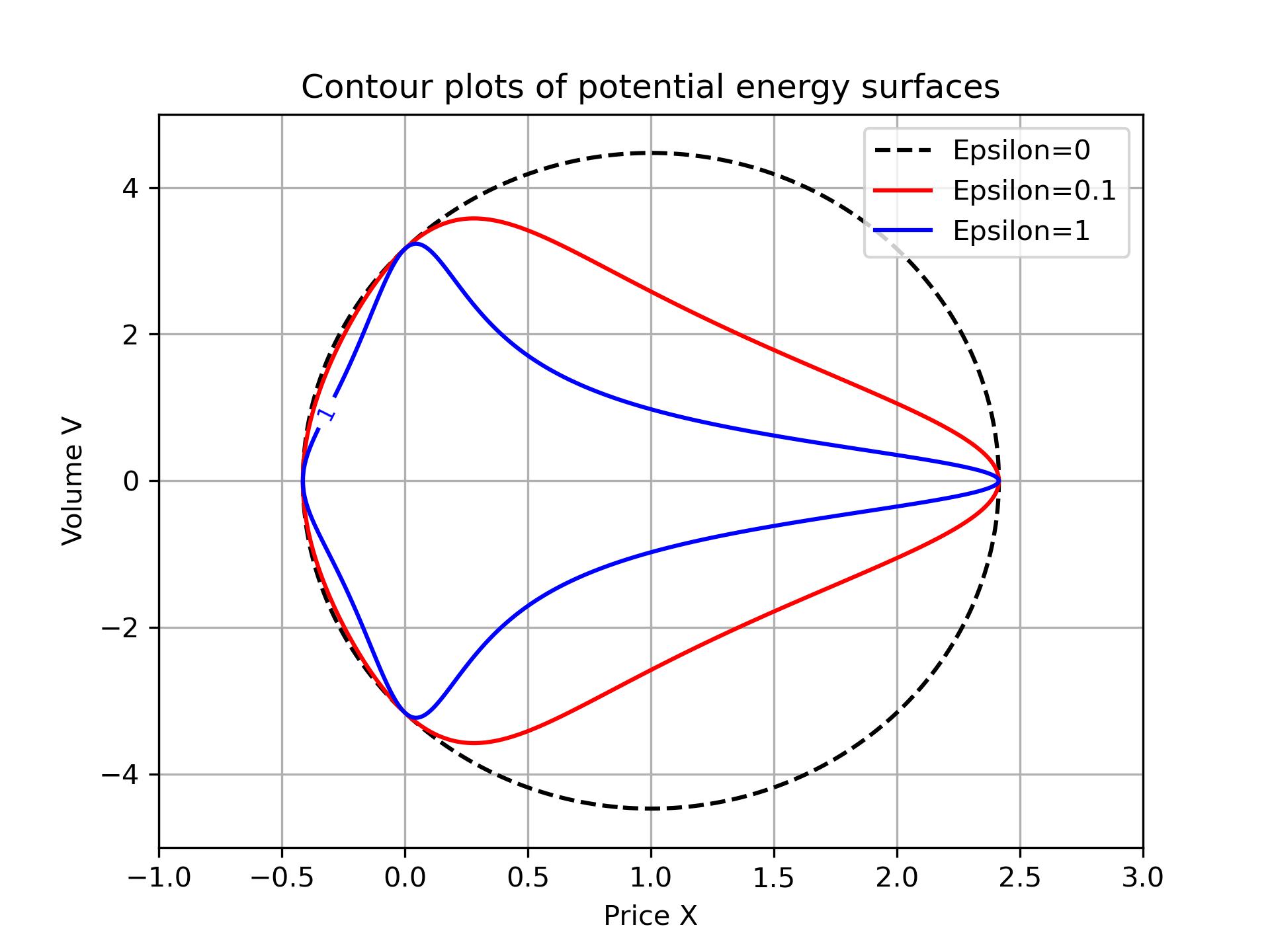}
\caption{Contour plots for the potential functions with different values for $\epsilon$. Each case has the same market energy. We set $x_0=1$.}\label{Pot}
\end{figure}
\section{Market Makers and Dynamic Risk Aversion}
\subsection{Hamiltonian Formulation:}
We now go back to equation \ref{H_start}, and consider an alternative means by which risk aversion can be incorporated into this model. That is, by incorporating risk aversion: $u=xv$ as a dynamic variable in its' own right. That is, to Hamiltonian \ref{H_start} we add a term representing the kinetic energy: $\frac{M_u\dot{u}^2}{2}$, together with the potential energy: $\frac{\epsilon(xv)^2}{2}$ as before.
\begin{align}\label{H_dynamic}
H(x,v,P_x,P_v)&=\frac{P_x^2}{2M_x}+\frac{P_u^2}{2M_u}+\frac{K_x(x-x_0)^2}{2}+\frac{\epsilon(u)^2}{2}
\end{align}
Clearly, this system comprises 2 independent harmonic oscillators. Market Maker inventory simply rises as high as needed to maintain regular oscillations in the price about the equilibrium: $x_0$. However, expanding the Euler-Lagrange equations in the original variables: $x$ and $v$ highlights that there is a singularity in the system. If you have a high enough risk appetite (low risk aversion), then there are situations where fluctuation in the market maker inventory is no longer able to constrain fluctuations in the price.
\begin{proposition}{Dynamic Lagrangian I}\label{unlim}\newline
Under Hamiltonian \ref{H_dynamic}, the Euler-Lagrange equations become:
\begin{align}\label{EL_simp2}
\ddot{x}&+\frac{K_x}{M_x}(x-x_0)=0\\
\ddot{v}&+2\bigg(\frac{\dot{x}}{x}\bigg)\dot{v}+\bigg(\frac{\epsilon}{M_u}-\frac{K_x(x-x_0)}{M_xx}\bigg)v=0\nonumber
\end{align}
\end{proposition}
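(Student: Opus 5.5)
The plan is to work first in the variables $(x,u)$, where the system decouples, and only afterwards substitute $u=xv$ to recover the equation for the inventory $v$.

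The Lagrangian corresponding to Hamiltonian \ref{H_dynamic} is
\begin{align*}
L(x,\dot{x},u,\dot{u})&=\frac{M_x\dot{x}^2}{2}+\frac{M_u\dot{u}^2}{2}-\frac{K_x(x-x_0)^2}{2}-\frac{\epsilon u^2}{2}
\end{align*}
I would treat $u$ as an independent coordinate alongside $x$. The Euler--Lagrange equations then give two uncoupled harmonic oscillators, as in the proof of the Static Lagrangian proposition:
\begin{align*}
M_x\ddot{x}+K_x(x-x_0)&=0\\
M_u\ddot{u}+\epsilon u&=0
\end{align*}
Dividing the first by $M_x$ is exactly the first line of \ref{EL_simp2}.

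For the second line I would substitute $u=xv$ into the $u$-equation. Differentiating twice with the product rule gives
\begin{align*}
\ddot{u}=\ddot{x}v+2\dot{x}\dot{v}+x\ddot{v}
\end{align*}
so that
\begin{align*}
x\ddot{v}+2\dot{x}\dot{v}+\ddot{x}v+\frac{\epsilon}{M_u}xv=0
\end{align*}
Next I would eliminate $\ddot{x}$ using the $x$-equation, $\ddot{x}=-\frac{K_x}{M_x}(x-x_0)$. Dividing through by $x$ then yields
\begin{align*}
\ddot{v}+2\Big(\frac{\dot{x}}{x}\Big)\dot{v}+\Big(\frac{\epsilon}{M_u}-\frac{K_x(x-x_0)}{M_xx}\Big)v=0
\end{align*}
which is the stated second equation.

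The only delicate step is the division by $x$. It is valid only on the set where $x\neq0$, so the derived equation for $v$ holds away from $x=0$. This is precisely the singularity the paper goes on to discuss, and I would state this restriction explicitly rather than try to remove it. I would also remark that the change of variables $(x,v)\mapsto(x,u)$ is a diffeomorphism on $x\neq0$. On that set, the Euler--Lagrange equations obtained in the $(x,u)$ coordinates are equivalent to those that would be obtained by writing $L$ directly in terms of $(x,v,\dot{x},\dot{v})$, so either route gives the same equations.
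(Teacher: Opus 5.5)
Your proof is correct, but it takes a different route from the paper's.

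\textbf{The two routes.} The paper works throughout in $(x,v)$. It pulls the Lagrangian back to equation \ref{L_simp2} and computes the momenta and their time derivatives by hand. This gives a coupled system, with $\ddot v$ in the $x$-equation and $\ddot x$ in the $v$-equation. The paper then decouples it by multiplying the $v$-equation by $v/x$ and subtracting. You instead take the Legendre transform of \ref{H_dynamic} as written, in $(x,u)$, where the system is two free oscillators, and substitute $u=xv$ only at the end.

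\textbf{How they are linked.} The two routes are related by the transformation law of the Euler--Lagrange expressions under the point map $u=xv$. Write $E_x,E_v$ for the expressions coming from \ref{L_simp2} and $E^{(u)}_x,E_u$ for yours. One checks directly that $E_v=xE_u$ and $E_x=E^{(u)}_x+vE_u$. So the paper's elimination step is exactly the inversion of this relation, which is possible precisely when $x\neq0$. Quoting these two identities would make your closing equivalence remark self-contained in one line.

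\textbf{What your route buys.} It is shorter, and the exact harmonic form of the price equation is immediate rather than the outcome of a cancellation. It shows the $\dot x/x$ singularity to be an artefact of the coordinate change, whose Jacobian determinant is $x$. It makes the restriction $x\neq0$ explicit, whereas the paper divides by $x$ tacitly. It also avoids the derivative bookkeeping in which the paper slips: $3M_u\dot x\dot x\dot v$ in $\frac{d}{dt}\partial L/\partial\dot v$ should read $3M_ux\dot x\dot v$, although the Euler--Lagrange equation the paper then writes is correct.

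\textbf{What the paper's route buys.} The direct computation in $(x,v)$ needs no appeal to covariance under point transformations. Its template also carries over verbatim to the limited market depth Lagrangian \ref{L_simp3}, where one simply adds $-f'(v)$ to $\partial L/\partial v$. In your coordinates the term $f(u/x)$ couples $x$ and $u$, so the two-free-oscillators shortcut is lost there, although substituting $u=xv$ still works.
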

\begin{proof}
The Lagrangian for the system is given by:
\begin{align}\label{L_simp2}
L(x,\dot{x},v,\dot{v})&=\frac{M_x\dot{x}^2}{2}+\frac{M_{u}(\dot{x}v+x\dot{v})^2}{2}-\frac{K_x(x-x_0)^2}{2}-\frac{\epsilon(xv)^2}{2}
\end{align}
and the canonical momentum by:
\begin{align*}
\frac{\partial L}{\partial\dot{x}}&=(M_x+M_{u}v^2)\dot{x}+M_{u}xv\dot{v}\\
\rightarrow\frac{d}{dt}\bigg(\frac{\partial L}{\partial\dot{x}}\bigg)&=(M_x+M_{u}v^2)\ddot{x}+3M_{u}\dot{x}v\dot{v}+M_{u}xv\ddot{v}+M_{u}x\dot{v}^2\\
\frac{\partial L}{\partial\dot{v}}&=M_{u}x^2\dot{v}+M_{u}x\dot{x}v\\
\rightarrow \frac{d}{dt}\bigg(\frac{\partial L}{\partial\dot{v}}\bigg)&=M_{u}x^2\ddot{v}+3M_{u}\dot{x}\dot{x}\dot{v}+M_{u}x\ddot{x}v+M_{u}\dot{x}^2v
\end{align*}
Further we have:
\begin{align*}
\frac{\partial L}{\partial x}&=M_{u}x\dot{v}^2+M_{u}\dot{x}v\dot{v}-K_x(x-x_0)-\epsilon v^2x\\
\frac{\partial L}{\partial v}&=M_{u}\dot{x}^2v+M_{u}x\dot{x}\dot{v}-\epsilon x^2v
\end{align*}
Then the Euler-Lagrange equations become:
\begin{align*}
(M_x+M_{u}v^2)\ddot{x}&+2M_{u}\dot{x}v\dot{v}+M_{u}xv\ddot{v}+K_x(x-x_0)+\epsilon v^2x=0\\
M_{u}x^2\ddot{v}&+2M_{u}x\dot{x}\dot{v}+M_{u}x\ddot{x}v+\epsilon vx^2=0
\end{align*}
As before, we can use the second equation to eliminate the $\ddot{v}$ from the first and vice versa. Ie we have:
\begin{align*}
M_{u}xv\ddot{v}&=-2M_{u}\dot{x}v\dot{v}-M_{u}v^2\ddot{x}-\epsilon xv^2
\end{align*}
So finally:
\begin{align*}
\ddot{x}&+\bigg(\frac{K_x}{M_x}\bigg)(x-x_0)=0\\
\ddot{v}&+2\bigg(\frac{\dot{x}}{x}\bigg)\dot{v}+\bigg(\frac{\epsilon}{M_{u}}-\frac{K_x(x-x_0)}{M_xx}\bigg)v=0
\end{align*}
\end{proof}
\subsection{System Dynamics I}
Note, that in the limit of an unrestricted volume with zero independent kinetic energy, we find that the market maker inventory is an oscillator where the damping (which can be positive or negative) and stiffness are determined by the price variable. The price itself is a harmonic oscillator about the concensus valuation $x_0$.

We have the following regimes:
\begin{itemize}
\item $\Big(\frac{\dot{x}}{x}\Big)^2<\Big(\frac{\epsilon}{M_{u}}-\frac{K_x(x-x_0)}{M_xx}\Big)\implies$ the system is stable. Changes in the market makers' inventory are able to absorb market price pressures to ensure that the price oscillates in a regular fashion about the concensus price, $x_0$.
\item $\Big(\frac{\dot{x}}{x}\Big)^2>\Big(\frac{\epsilon}{M_{u}}-\frac{K_x(x-x_0)}{M_xx}\Big)\implies$ higher risk appetite, or a stronger confidence in the fundamental valuation for $x$ (represented by a higher value for $K_x$) will lead to situations where fluctuation in the volume of stock held by market makers cannot constrain the market price pressure.
\item In this instance, an increasing price could no longer be constrained by Market Makers selling down their inventory. Similarly a reducing price could no longer be supported by Market Makers purchasing the stock and building up their inventory. The market would become unstable and prone to price blow up or collapse.
\end{itemize}
\subsection{Limited Market Depth}
In this section, we retain the assumption that the volume of risk held by market makers is driven by risk appetite alone, but now assume that the growth in the volume is constrained by a potential function: $f(v)$. The Lagrangian becomes:
\begin{align}\label{L_simp3}
L(x,\dot{x},v,\dot{v})&=\frac{M_x\dot{x}^2}{2}+\frac{M_{u}(\dot{x}v+x\dot{v})^2}{2}-\frac{K_x(x-x_0)^2}{2}-\frac{\epsilon(xv)^2}{2}-f(v)
\end{align}
\begin{proposition}{Limited Volume Driven by Risk Appetite Alone}\newline
Under the assumption that $M_v=0$, the Euler-Lagrange equations become:
\begin{align}\label{EL_simp3}
\ddot{x}&+\frac{K_x}{M_x}(x-x_0)=\frac{v}{x}f'(v)\\
\ddot{v}&+2\bigg(\frac{\dot{x}}{x}\bigg)\dot{v}+\bigg(\frac{\epsilon}{M_{u}}-\frac{K_x(x-x_0)}{M_xx}\bigg)v+\bigg(1+\frac{v}{x}\bigg)f'(v)=0\nonumber
\end{align}
\end{proposition}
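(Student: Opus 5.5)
The plan is to repeat the computation in the proof of Proposition \ref{unlim}, now starting from the Lagrangian \ref{L_simp3}. The assumption $M_v=0$ means $v$ has no kinetic term of its own. Its only kinetic contribution is through $u=xv$, so the kinetic part of \ref{L_simp3} is identical to that of \ref{L_simp2}. The only change is the extra potential term $-f(v)$.

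\textbf{Step 1: the raw Euler--Lagrange equations.} The canonical momenta $\partial L/\partial\dot{x}$ and $\partial L/\partial\dot{v}$, their time derivatives, and $\partial L/\partial x$ are exactly as in the proof of Proposition \ref{unlim}. The one change is
\begin{align*}
\frac{\partial L}{\partial v}&=M_u\dot{x}^2v+M_ux\dot{x}\dot{v}-\epsilon x^2v-f'(v).
\end{align*}
So the coupled system is the same as before, except that $+f'(v)$ is added to the $v$-equation:
\begin{align*}
(M_x+M_uv^2)\ddot{x}+2M_u\dot{x}v\dot{v}+M_uxv\ddot{v}+K_x(x-x_0)+\epsilon v^2x&=0\\
M_ux^2\ddot{v}+2M_ux\dot{x}\dot{v}+M_ux\ddot{x}v+\epsilon vx^2+f'(v)&=0
\end{align*}

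\textbf{Step 2: decouple, as in Proposition \ref{unlim}.} Multiply the $v$-equation by $v/x$ and subtract it from the $x$-equation. The terms $M_uv^2\ddot{x}$, $2M_u\dot{x}v\dot{v}$, $M_uxv\ddot{v}$ and $\epsilon v^2x$ cancel. What remains is
\begin{align*}
M_x\ddot{x}+K_x(x-x_0)-\frac{v}{x}f'(v)&=0.
\end{align*}
This is a harmonic oscillator for the price, forced by $\frac{v}{x}f'(v)$, and gives the first line of \ref{EL_simp3} after dividing by $M_x$.

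\textbf{Step 3: the inventory equation.} Divide the $v$-equation by $M_ux^2$. Then substitute for $\ddot{x}$ using the first line of \ref{EL_simp3}. This reproduces the damping term $2(\dot{x}/x)\dot{v}$ and the stiffness term $\big(\frac{\epsilon}{M_u}-\frac{K_x(x-x_0)}{M_xx}\big)v$ of \ref{EL_simp2}. It also produces two new $f'(v)$ contributions: one from the direct $f'(v)$ term, and one from the forcing $\frac{v}{x}f'(v)$ inside $\ddot{x}$, multiplied by $v/x$. Collecting these two gives the final term of the second line of \ref{EL_simp3}.

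\textbf{Main obstacle: the coefficient bookkeeping of the $f'(v)$ terms.} Carried out literally, my computation gives coefficient $\frac{1}{M_x}$ on the forcing in the price equation. It gives $\frac{1}{x^2}\big(\frac{1}{M_u}+\frac{v^2}{M_x}\big)$ on $f'(v)$ in the inventory equation. The statement instead displays $\frac{v}{x}f'(v)$ and $\big(1+\frac{v}{x}\big)f'(v)$. These agree with my coefficients only after a normalisation that absorbs the masses and powers of $x$ into the definition of $f$; for example, $M_x=M_u=1$ together with rescaling $f'(v)$ by a power of $x$ would be needed. I would first redo the derivation carefully, then check whether the paper intends such a convention. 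If not, I would state the corrected coefficients while keeping the qualitative structure, which is all the later discussion uses: a forced harmonic price and a damped, price-modulated inventory oscillator with extra restoring force from $f$.
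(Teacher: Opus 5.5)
Your Steps 1 and 2 are the paper's own route. The only change from Proposition \ref{unlim} is the extra $-f'(v)$ in $\partial L/\partial v$, and $\ddot v$ is then eliminated by subtracting $v/x$ times the $v$-equation from the $x$-equation.

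Your bookkeeping is correct, and the discrepancy you found is a genuine error in the statement. The paper's proof uses the same elimination but loses the $1/M_x$ and miscomputes the $f'$ coefficient in its final display. Its intermediate display also has $-2M_ux\dot{x}\dot{v}$ where your $-2M_u\dot{x}v\dot{v}$ is right. The equations that actually follow from \ref{L_simp3} are
\begin{align*}
\ddot{x}+\frac{K_x}{M_x}(x-x_0)&=\frac{v}{M_xx}f'(v),\\
\ddot{v}+2\Big(\frac{\dot{x}}{x}\Big)\dot{v}+\Big(\frac{\epsilon}{M_u}-\frac{K_x(x-x_0)}{M_xx}\Big)v+\frac{1}{x^2}\Big(\frac{1}{M_u}+\frac{v^2}{M_x}\Big)f'(v)&=0.
\end{align*}
There is a quick independent check. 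In the variables $(x,u)$ with $u=xv$, \ref{L_simp3} reads $\frac{M_x\dot{x}^2}{2}+\frac{M_u\dot{u}^2}{2}-\frac{K_x(x-x_0)^2}{2}-\frac{\epsilon u^2}{2}-f(u/x)$. This gives immediately $M_x\ddot{x}+K_x(x-x_0)=\frac{v}{x}f'(v)$ and $M_u\ddot{u}+\epsilon u=-\frac{1}{x}f'(v)$. Expanding $\ddot{u}=x\ddot{v}+2\dot{x}\dot{v}+v\ddot{x}$ and dividing by $M_ux$ reproduces your second coefficient.

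The one thing to drop is the idea that some convention might reconcile your result with the statement; none does. Setting $M_x=1$ repairs the first line. But even with $M_x=M_u=1$, the $f'(v)$ coefficient in the second line is $(1+v^2)/x^2$, not $1+v/x$. Rescaling $f'$ by a power of $x$ is not available, since $f$ depends on $v$ alone, and it would spoil the first line anyway. So simply state and prove the corrected system above. The qualitative picture used afterwards survives unchanged: a harmonic price forced by $\frac{v}{M_xx}f'(v)$, and a price-modulated inventory oscillator with an extra $f'$ term.
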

\begin{proof}
The proof follows proposition \ref{unlim}, with the exception that:
\begin{align*}
\frac{\partial L}{\partial v}&=M_{u}\dot{x}^2v+M_{u}x\dot{x}\dot{v}-\epsilon x^2v-f'(v)
\end{align*}
So that the Euler-Lagrange equations become:
\begin{align*}
(M_x+M_{u}v^2)\ddot{x}&+2M_{u}\dot{x}v\dot{v}+M_{u}xv\ddot{v}+K_x(x-x_0)+\epsilon v^2x=0\\
M_{u}x^2\ddot{v}&+2M_{u}x\dot{x}\dot{v}+M_{u}x\ddot{x}v+\epsilon vx^2+f'(v)=0
\end{align*}
This is important because now, when we try to eliminate the terms in $\ddot{v}$, we find a driving force acting on $x$, dependent on the level of $v$:
\begin{align*}
M_{u}xv\ddot{v}&=-2M_{u}x\dot{x}\dot{v}-M_{u}v^2\ddot{x}-\epsilon xv^2-\frac{v}{x}f'(v)
\end{align*}
and:
\begin{align*}
\ddot{x}&+\frac{K_x}{M_x}(x-x_0)=\frac{v}{x}f'(v)\\
\ddot{v}&+2\bigg(\frac{\dot{x}}{x}\bigg)\dot{v}+\bigg(\frac{\epsilon}{M_{u}}-\frac{K_x(x-x_0)}{M_xx}\bigg)v+\bigg(1+\frac{v}{x}\bigg)f'(v)=0
\end{align*}
\end{proof}
\subsection{System Dynamics II}
\begin{itemize}
\item Now the market is a driven harmonic oscillator. For example, by setting:
\begin{align*}
\begin{cases}
f(v)&=K_vv\text{, if }|v|\geq v_{max}\\
f(v)&=0\text{, if }|v|<v_{max}
\end{cases}
\end{align*}
we generate a system whereby the price: $x$, receives a kick of strength $\frac{K_vv_{max}^2}{x}$, every time the market maker inventory hits the maximum level.
\item The singularity notwithstanding, this gives the system the dynamics of a harmonic oscillator with a non-linear kick: a well known chaotic system.
\item Due to the dynamics of $v$ given in equation \ref{EL_simp3}, the system is still prone to unbounded explosions in the market maker inventory at high energy. Ie high values for $\big(\frac{\dot{x}}{x}\big)^2$.
\end{itemize}
\subsection{Comment on Kolmogorov-Arnold-Moser Theory}
Note that by including a function: $f(v)=f(u/x)$, we can write the Hamiltonian in a form suitable for applying the Kolmogorov-Arnold-Moser method as in section \ref{KAM}. In this instance however, we will always end up with a singularity at $x=0$. For example, if we treat $u$ as the dynamic variable, we end up with a Perturbation like:
\begin{align*}
H(I_x,q_x,I_u,q_u)&=\frac{K_xI_x}{M_x}+\epsilon I_u+f\bigg(\frac{(I_u/M_u)^{0.5}\sin(q_u)}{(I_x/M_x)^{0.5}\sin(q_x)+x_0}\bigg)
\end{align*}
Thus, we can already see that the perturbation series will not converge for $x_0>(I_x/M_x)^{0.5}$, even though for systems at low enough energy, where the price $x$ is bounded away from zero, the invariant Tori will survive, and the dynamics will be regular.

This manifests itself practically in that numerical simulations of this system at high energy will tend to `blow up'. Ie the underlying variables rapidly grow to unboundedly large numbers. One can view this as the financial market mechanism breaking down, or as the price exiting the domain of model applicability. For example, due to the extreme level of risk, market makers no longer wish to quote prices.
\section{Numerical Simulation:}
\subsection{Poincar\'{e} Section}
In this section we illustrate some of the key properties of the model by carrying out a numerical simulation of the system described by equation \ref{H_static}. This has been carried out using the pyHamSys Python library (see \cite{pyHamsys}, \cite{pyHamsys2}). First we illustrate the Poincar\'{e} section as follows:
\begin{itemize}
\item We set: $x_0=3$, $K_x=0.11,K_v=0.1,M_x=M_v=1$ throughout this section.
\item We look at variable values for $\epsilon$ and variable values for the system energy.
\item We sample random initial conditions, until the initial energy falls within $\pm 0.01$ of the energy target. This represents a single `path' for the simulation.
\item The Poincar\'{e} sections shown below are simulated using 100 such paths. We show the cross section for up crossings ($\dot{v}>0$) at $v=0$.
\end{itemize}
Figure \ref{PC_E_1} shows the Poincar\'{e} sections for the energy target set to $1$, and $\epsilon=0.0001,0.001,0.01,0.1$. At the lowest value for $\epsilon$, the motion closely resembles that for the harmonic oscillator. For the section with $\epsilon=0.001$, close inspection reveals that regular (ie periodic/quasi-periodic) motion dominates for the majority of initial conditions. However, the perturbation has caused the creation of alternating elliptic and hyperbolic fixed points. The majority of the phase space in this instance is made up of regular motion around the elliptic fixed points.

For $\epsilon=0.01$, there is now a large chaotic sea, which has formed around the hyperbolic fixed points. As the size of the perturbation (or the system energy) increases, the proportion of initial conditions that exhibit chaos increases.

In the final section, the motion is fully chaotic. Each of the paths ergodically explores the energy cross section. 
\begin{figure}
\centering
\includegraphics[scale=0.28]{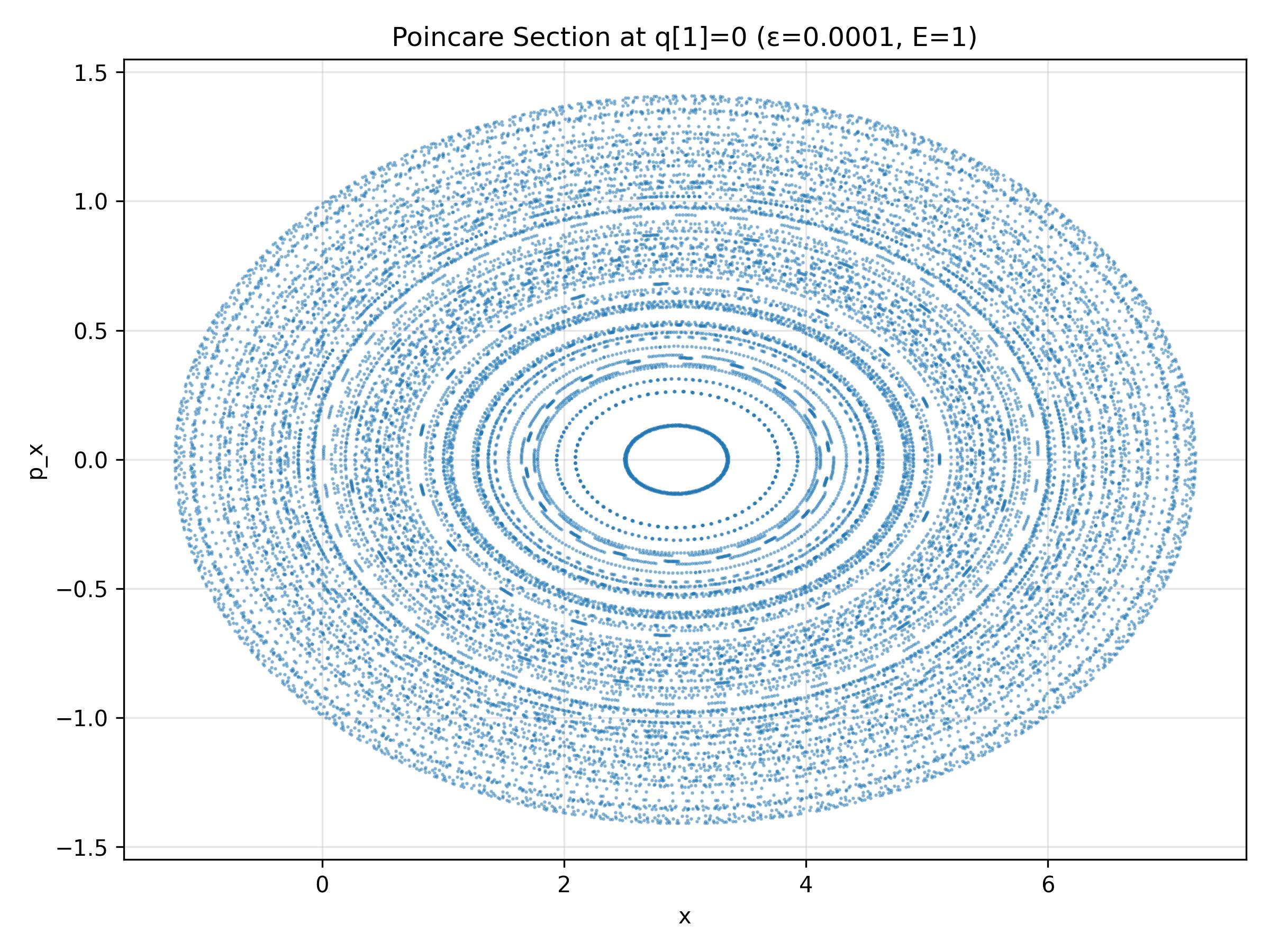}
\includegraphics[scale=0.28]{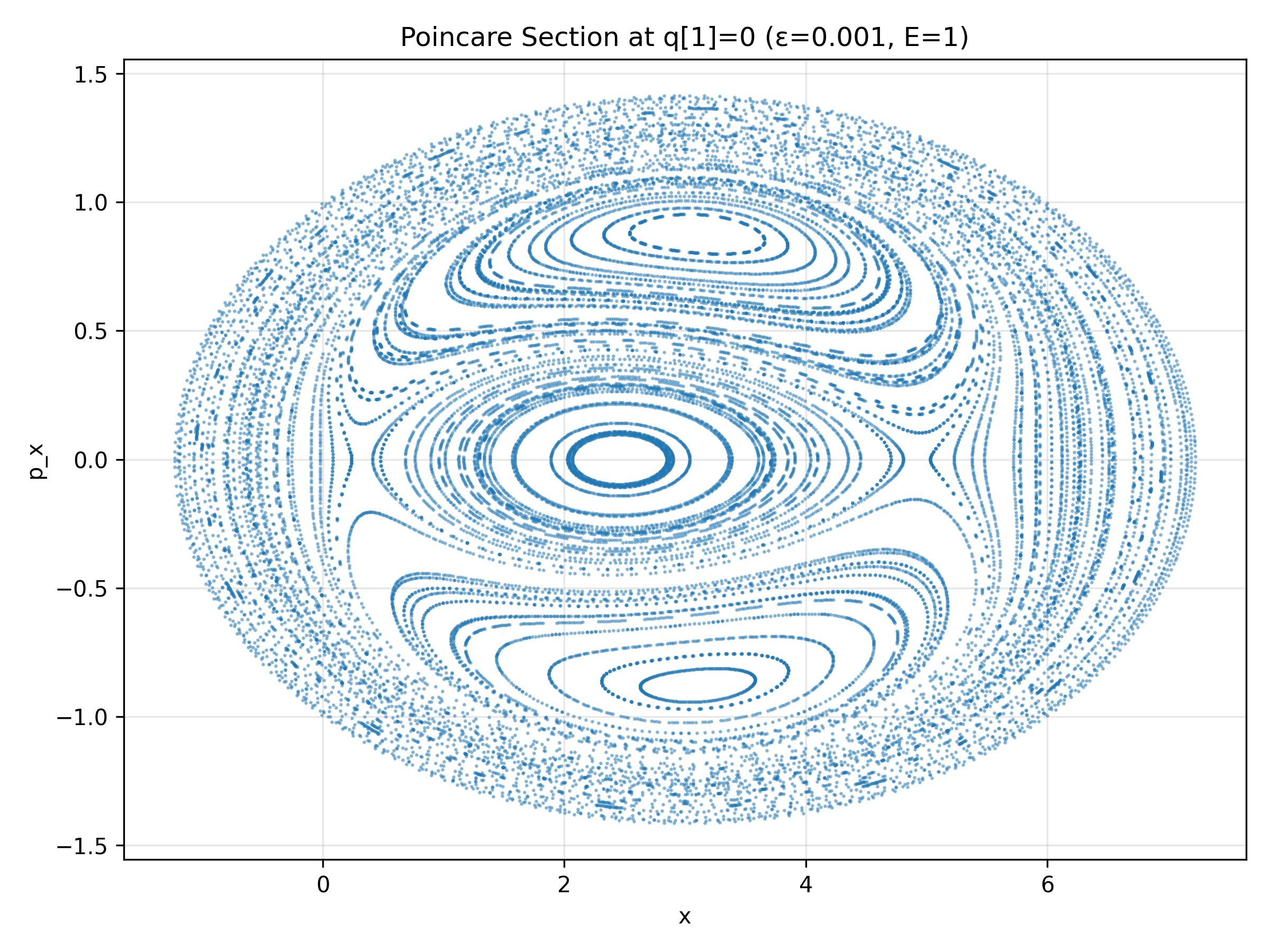}
\includegraphics[scale=0.28]{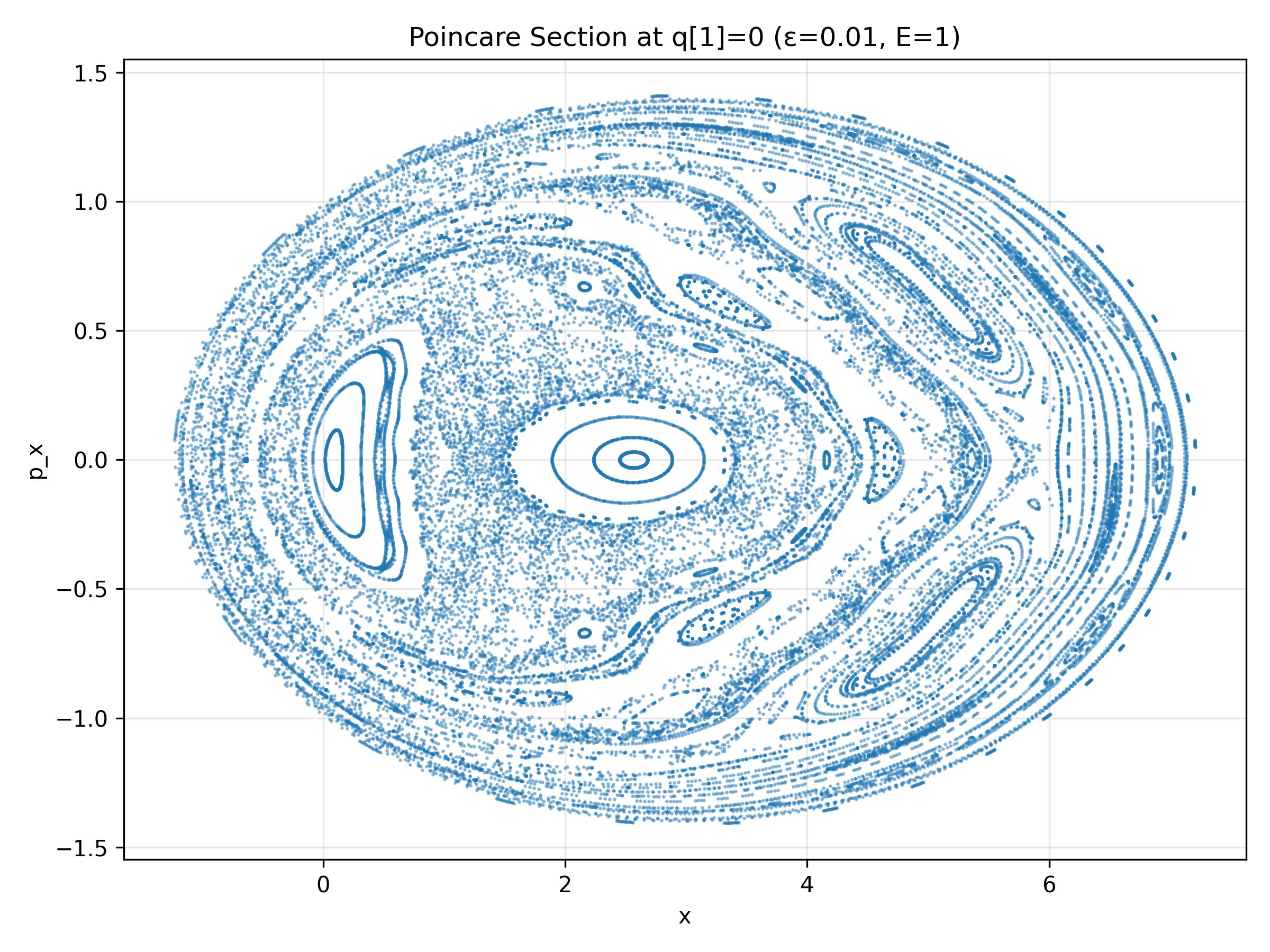}
\includegraphics[scale=0.28]{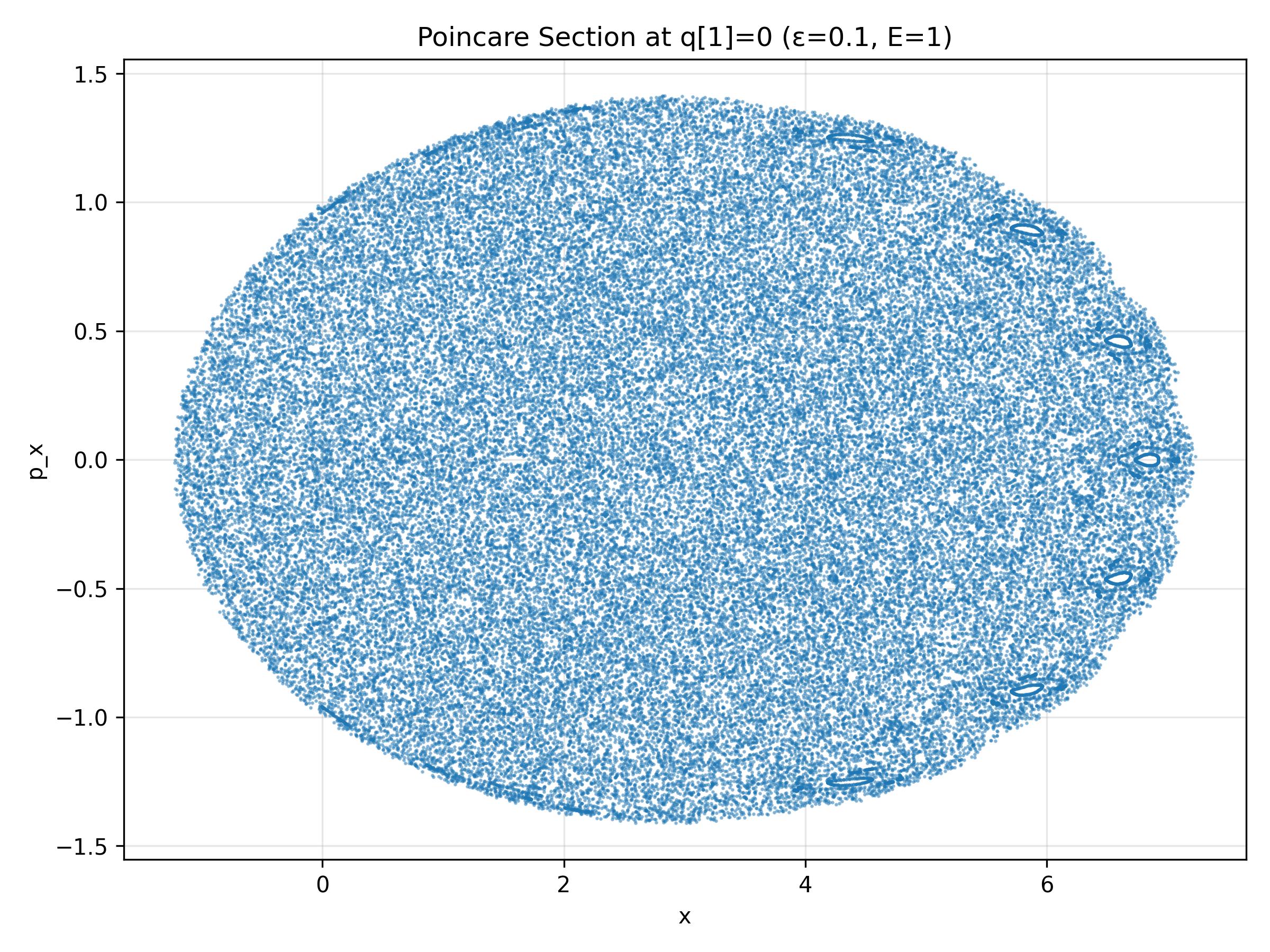}
\caption{Poincar\'{e} sections for $v=0$, $\dot{v}>0$ at energy=1$\pm 0.01$. We show the section with $\epsilon=0.0001$, $0.001$, $0.01$ and $0.1$}\label{PC_E_1}
\end{figure}
\begin{figure}
\includegraphics[scale=0.2]{poincare_eps_0.001_E_1.jpg}
\includegraphics[scale=0.2]{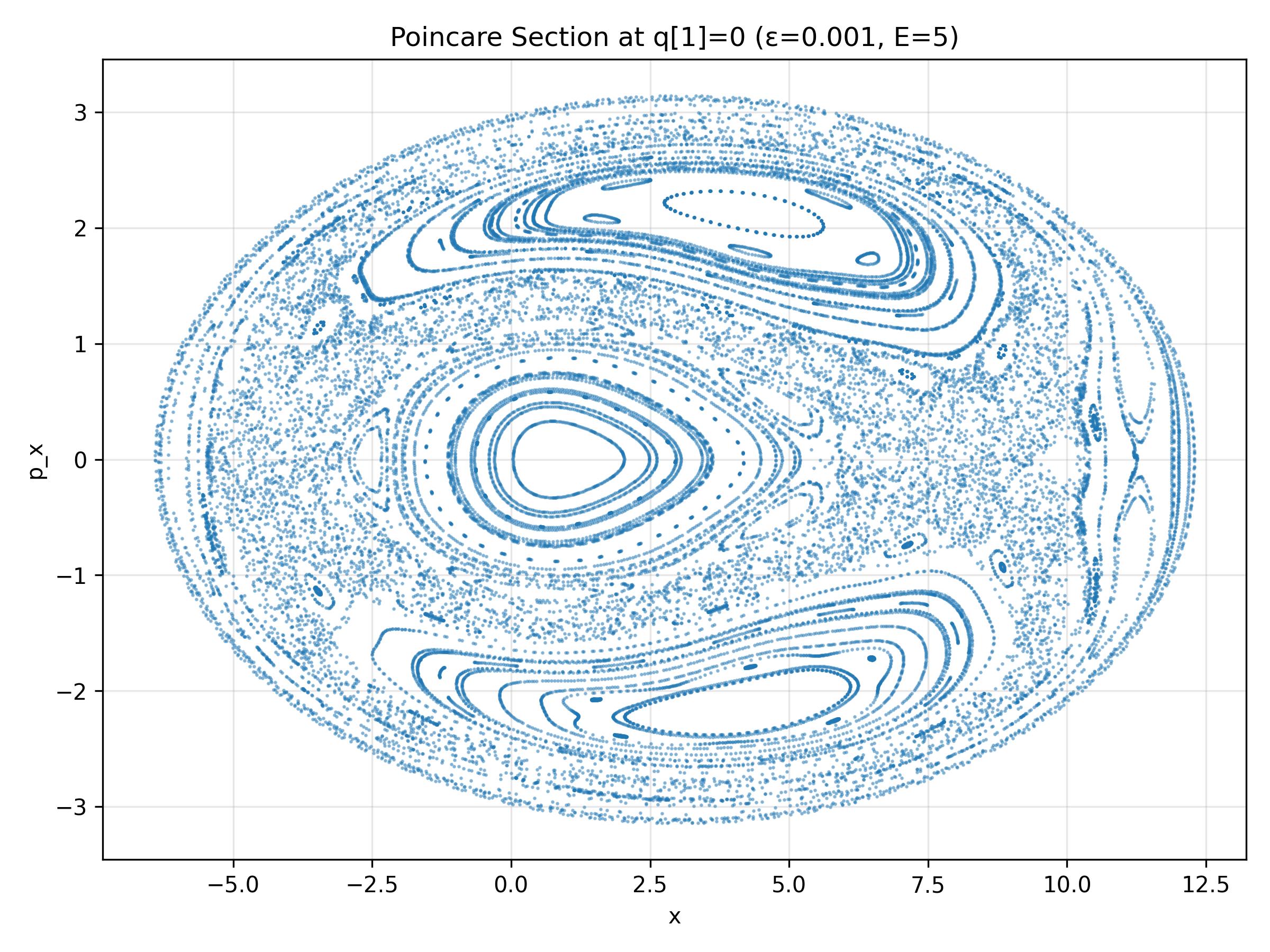}
\includegraphics[scale=0.2]{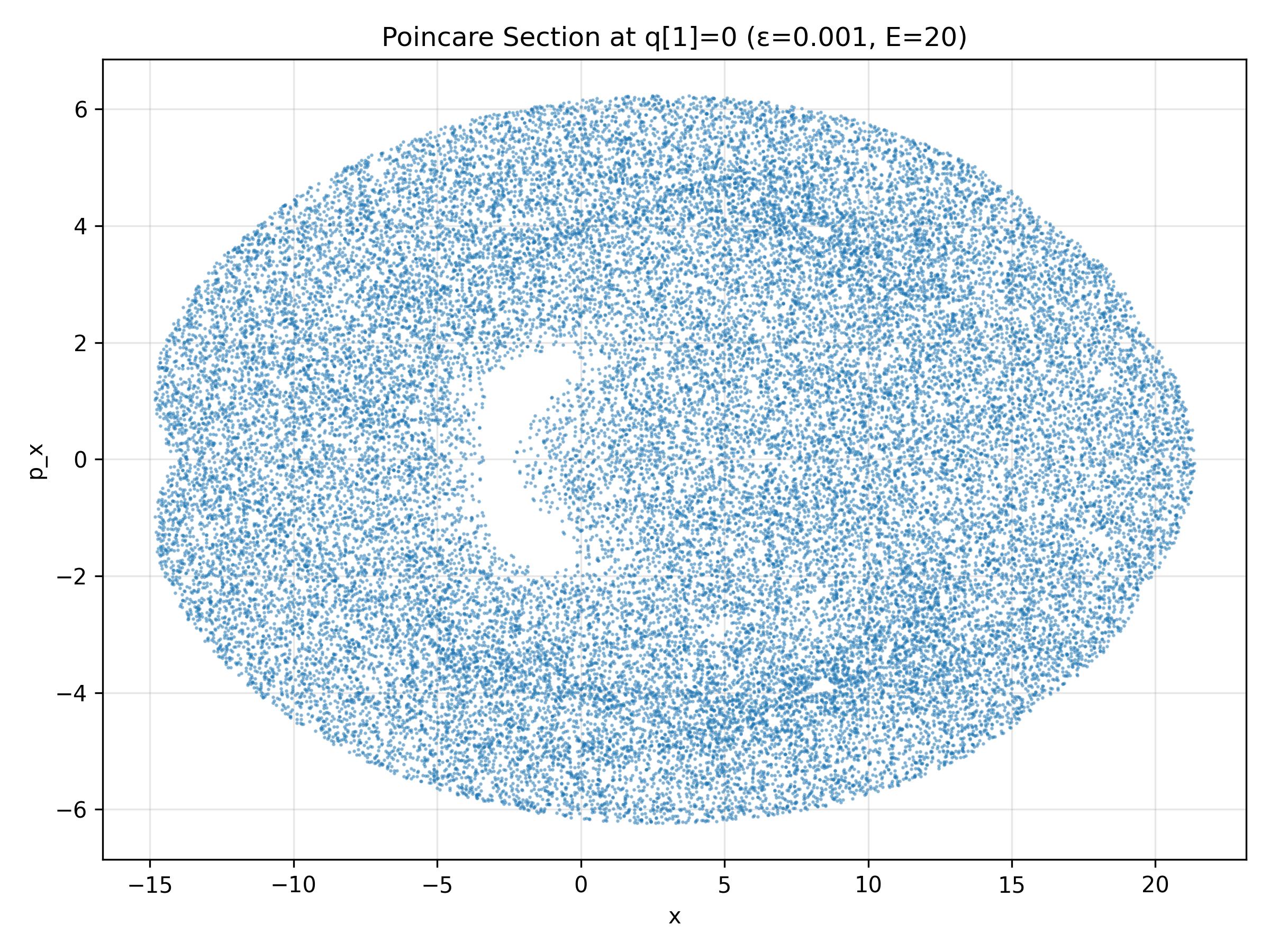}
\caption{Poincar\'{e} sections for $v=0$, $\dot{v}>0$ at energy=1$\pm 0.01$, 5$\pm 0.01$, and 20$\pm 0.01$. $\epsilon=0.001$.}\label{PC_eps_0p001}
\end{figure}
Figure \ref{PC_eps_0p001} shows a similar picture. Here the value for  $\epsilon$ is fixed at 0.001, and the energy target is set to 1, 5, and 20. With $\epsilon=0.001$ and the energy target at 20, the Chaotic motion is not fully ergodic, with a noticeable Cantori in the centre of the Poincar\'{e} section.
\subsection{Lyapunov Exponents and the Kolmogorov-Sinai Entropy}\label{L_KS}
The key phenomenon that is under investigation is the unpredictability of the changes in financial market prices. Put another way, when we observe new price movements we are accruing information. One way of measuring the rate at which new information is generated by a process is through the Kolmogorov-Sinai entropy:
\begin{definition}\label{KS_ent}
Let $P_{i_0,i_1,\dots,i_n}$ represent the probability that a variable is in hypercube $i_0$ to begin with, in $i_1$ after 1 step, and so on. Then the Kolmogorov-Sinai entropy is defined by the supremum over all partitions of:
\begin{align*}
H_{KS}&=\lim_{n\rightarrow\infty}\sum_{n=1}^{N-1}K_{n+1}-K_n\text{, }K_n=-\sum_{i_0,\dots,i_n}P_{i_0,i_1,\dots,i_n}\log(P_{i_0,i_1,\dots,i_n})
\end{align*}
\end{definition}
Note the following:
\begin{itemize}
\item For a process exhibiting regular motion (ie periodic, or quasi-periodic) the Kolmogorov-Sinai entropy is zero. The evolution of the process is fully predictable, and there is no new information.
\item For a non-deterministic Markov process continuous in state and time, that is non deterministic, for example a Wiener process, the Kolmogorov-Sinai entropy is infinite (see \cite{OrnsteinShields}). The fractal nature of the process means that finer partitions will always reveal more information.
\item For chaotic motion, such as that seen here, the Kolmogorov-Sinai entropy is positive, but finite. Eventually, increasing the measurement frequency (via a more granular partition) will not lead to any more information.
\end{itemize}
In fact, from Pesin's formula (see \cite{Pesin}) $H_{KS}$ can be estimated from the sum of all positive Lyapunov exponents for the process:
\begin{align*}
H_{KS}&=\sum_{\lambda_i>0}\lambda_i
\end{align*}
In figure \ref{KS_lyap} we show an estimate of the Kolmogorov-Sinai entropy, calculated using the sum of positive Lyapunov exponents, for energy of $5$, and $\epsilon$ values ranging from 0.001 to 0.1. These have been estimated using 5 sample paths (5 for each value of $\epsilon$) with an energy tolerance of $\pm 0.01$. The error bars show the range of values obtained from the sample paths used.
\begin{figure}
\includegraphics[scale=0.4]{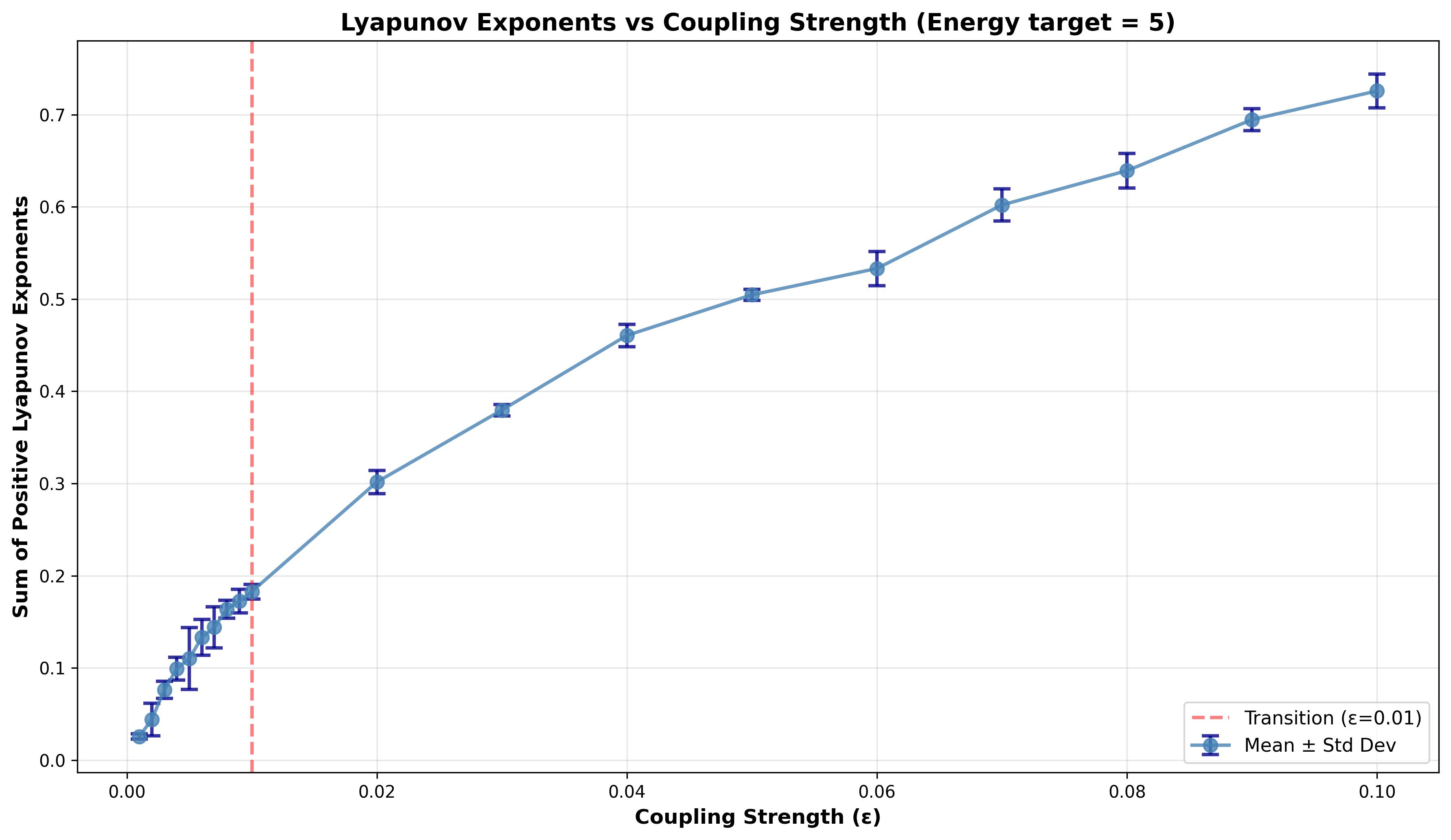}
\caption{The chart shows the sum of positive Lyapunov exponents, calculated using paths with energy target $5$, and $\epsilon$ ranging from $0.001$ up to $0.1$}\label{KS_lyap}
\end{figure}.
\subsection{Sampling Frequency}
The original question asked was in relation to whether one can generate unpredictable prices from a simple closed (Hamiltonian) system. As discussed in section \ref{L_KS}, a chaotic process can never be truely `random' in the sense of the type of diffusion process commonly applied in finance. The Kolmogorov-Sinai entropy will always remain finite. To emphasise this point, in figure \ref{X_traj} we show a randomly sampled process for the price $X$, with energy $10.6$, and $\epsilon=0.1$. It is clear from this trajectory that the process, whist chaotic, is not truly random. The question therefore arises as to whether such a process can ever truely become indistinguishable from processes, such as a financial time-series, that are modelled using random noise and stochastic processes.

The Lyapunov time: $T_{\lambda}=1/\lambda_{max}$ represents an estimate for how long before tiny fluctuations in the measurement of initial conditions, explode into significant differences. If we assume that the frequency with which we can make measurements of a random process is limited, and as such the typical period between observations is $T_{obs}$, then as $\lambda_{max}$ grows, once $T_{\lambda}<<T_{obs}$, the chaotic process will become indistinguishable from random noise. For example, the true `price' often quoted on data aggregation platforms is estimated based on a vast number of underlying transactions. With this in mind a limitation in the frequency with which observations can be made seems reasonable.

In figure \ref{x_samp} we show the process $x$, part of which is shown in figure \ref{X_traj}, periodically sampled exactly every 100 steps. This process now looks more convincingly `random'. A histogram for the full set of $x$ values ($E=10.6,\epsilon=0.1$) is shown in figure \ref{x_samp}.
\begin{figure}
\centering
\includegraphics[scale=0.45]{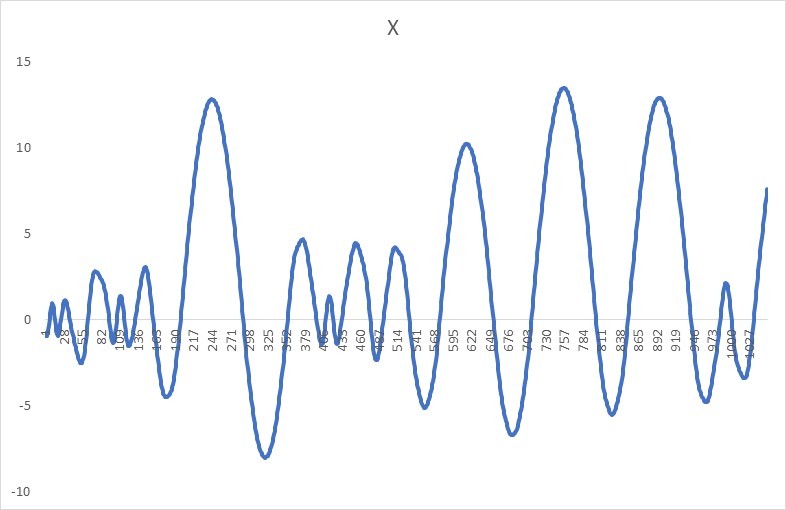}
\caption{The chart shows 1000 steps of an example price process $X$, where the system energy is 10.6, and $\epsilon=0.1$}\label{X_traj}
\end{figure}
\begin{figure}
\centering
\includegraphics[scale=0.6]{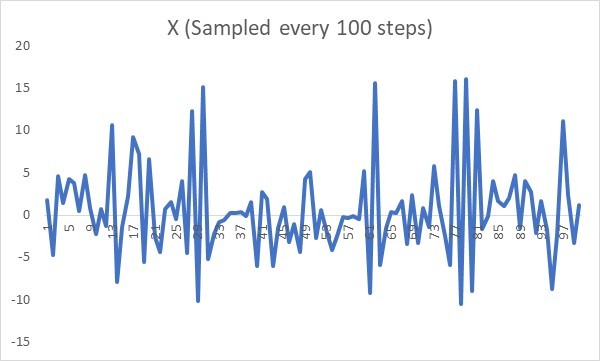}
\caption{An example of the process generated by sampling the full price trajectory for $x$ once every 100 steps precisely.}\label{x_samp}
\end{figure}
\begin{figure}
\centering
\includegraphics[scale=0.4]{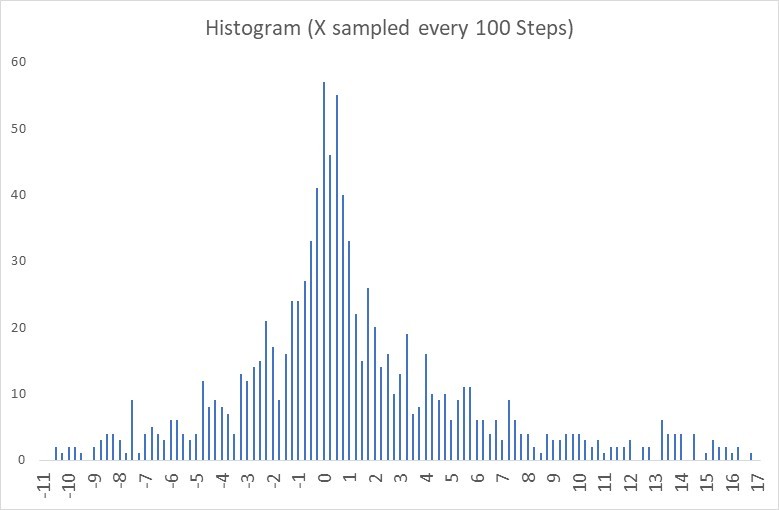}
\caption{Histogram for the value of X sampled exactly every 100 steps.}\label{x_hist}
\end{figure}
\section{Conclusion:}
The analysis shows that, whilst external shocks and interaction with the external economic environment undoubtedly impact the price movements of assets traded on financial exchanges, random noise is not required to generate unpredictable price changes. By encoding relatively simple market forces that impact prices into a Hamiltonian, one sees price processes that appear, at first glance at least, to be indistinguishable from time-series generated using random noise. Furthermore, by encoding market forces into a Hamiltonian, one can model real financial artefacts into the model dynamics in a natural way. For example the simple models illustrated in this article, whilst not necessarily being realistic for every day use, enable factors such as the impact of market maker inventory and risk appetite, as well as market breakdown, into modelling the unpredictable changes in financial market prices.
\section*{Declaration of Conflicting Interests}
The author declares no potential conflicts of interest with respect to the research, authorship, and/or publication of this article.

\end{document}